\title{The complexity of proving that a graph is Ramsey}
\author{%
  Massimo Lauria\\
  \texttt{lauria@kth.se}\\
  Royal Institute of Technology, Stockholm
  \and
  Pavel \Pudlak\\
  \texttt{pudlak@cas.cz}\\
  Czech Academy of Sciences, Prague
  \and
  \Vojtech \Rodl\\
  \texttt{rodl@mathcs.emory.edu}\\
  Emory University, Atlanta
  \and
  Neil Thapen\\
  \texttt{thapen@cas.cz}\\
  Czech Academy of Sciences, Prague
}
\date{\today}
\newcommand{\ignore}[1]{}
\newcommand{\clique}{\mathrm{Clique}}
\newcommand{\NB}{N}
\newcommand{\ram}{\Psi}
\begin{document} 


\maketitle

\begin{abstract}
We say that a graph with $n$ vertices is $c$-Ramsey if it does not contain either a clique
or an  independent set of size $c \log n$.  We define a CNF formula
which expresses this property for a graph $G$.
We show a superpolynomial lower bound
on the length of resolution proofs  that $G$ is $c$-Ramsey,  for
\emph{every} graph $G$.
Our proof makes use of the fact that every Ramsey graph must contain
a large subgraph with some of the statistical properties of the random graph.
\end{abstract}


\section*{Introduction}
Graphs with special properties often require non trivial and/or probabilistic constructions.
Furthermore, once the graph is constructed it may be hard to verify that the property holds, and if
such a graph is given to a new user without a suitable certificate, he must either
 verify the construction again or blindly trust the graph.

In this paper we are interested in how hard it is to certify that a graph $G$ of size $n$ is $c$-Ramsey,
that is, has no clique or independent set of length $c \log n$.
Constructing such graphs was one of the first applications of the probabilistic method
in combinatorics. But the brute force approach to checking that $G$ satisfies
the property takes time $n^{O(\log n)}$
(compare the well-known hard problem of looking for cliques).
We show that there is no resolution proof that $G$ is $c$-Ramsey with length shorter than
$n^{O(\log n)}$. This is not a worst-case result, but rather holds for \emph{every} graph $G$.
However we are only able to show this for what we call the ``binary'' formalization
of the Ramsey property as a propositional formula; for an alternative,
``unary'' formalization we only know a treelike resolution lower bound
(see Section \ref{sec:openproblems}).

Notice that the lower bound on resolution proof size shows that the verification
problem is hard for quite a large class of algorithms, since most SAT
solvers used in practice are essentially proof search algorithms for resolution~\cite{Pipatsrisawat2011512}.
Notice also that, while it does not follow from the resolution
lower bound that there is no algorithm which will construct a Ramsey graph
in polynomial time, it does follow that, given such an algorithm, there is  no
polynomial-size resolution proof that the algorithm works.

The finite Ramsey theorem states that for any $k$, there is some $N$ such that every
graph of size at least $N$ contains a clique or independent set of size $k$. We write $r(k)$
for the least such $N$.
Computing the actual value of $r(k)$ is challenging, and so far only a few values
have been discovered.  For this reason there is great interest in asymptotic estimates~\cite{erdos1947some,spencer77,conlon2009}.

A $c$-Ramsey graph is a witness that $r(c \log n) > n$, so proving that a graph
is Ramsey is in some sense proving a lower bound for $r(k)$.
Previously, proof complexity has  focused on upper bounds for $r(k)$.
Krishnamurthy and Moll~\cite{km81} proved partial results on the complexity of proving the exact upper bound, and conjectured this formula to be hard in general. \Krajicek later proved an exponential lower bound on the length of bounded depth Frege proofs of the same statement~\cite{kra10}.
The upper bound $r(k) \le 4^k$ has short proofs in a relatively weak fragment of sequent calculus,
in which every formula in a proof has small constant depth~\cite{pudlak91}, \cite{kra10}.
Recently \Pudlak \cite{pudlak2012ramsey} has shown a lower bound on proofs of $r(k) \le 4^k$
in resolution. We discuss this in more detail in Section 1.
There are also results known about the off-diagonal Ramsey numbers $r(k,s)$ where cliques of size $k$ and independent sets of size $s$ are considered. See~\cite{es35,aks80,kim95,bohman2010early} for estimates and~\cite{cgl10} for resolution lower bounds.

\medskip

In Section 1 we formally state our main result, mention some open problems,
and then outline the high-level method we will use. In Section 2 we apply this
to prove a simple version of our main theorem, restricted to the case when $G$ is
a random graph. In Section 3 we prove the full version. This will use one extra ingredient,
a result from \cite{promel1999non} that every Ramsey graph $G$ has a large
subset with some of the statistical density properties of the random graph.

\ignore{
In proof complexity we are interested in the length of proofs, i.e.\ certificates for a certain property, that can be \emph{verified efficiently}.  In most cases it is possible to focus on the length of certificates of unsatisfiability. If $\NP\not=\coNP$ these may be very long. Nevertheless particular claims can have surprisingly short proofs.

Contrary to structural computational complexity, in proof complexity it is customary to prove unconditional results by focusing on concrete proof formats. This is not a big limitation. Often a large families of algorithms can be captured by rather simple  formats.  Most SAT solver used in practice are essentially proof search algorithms for resolution~\cite{Pipatsrisawat2011512}.

Graphs with special properties often require non trivial and/or probabilistic constructions.  Furthermore it may be hard to verify the properties afterward.  Even if the graph is saved, a new user either verifies the construction again or has to blindly trust the graph.
In this paper we are interested in how hard is to prove that a graph $G$ of order $n$ is Ramsey, i.e.\ $G$ has no clique nor independent set of length $2 \log n$.  Notice that Ramsey graphs seems to be hard instances for the simpler problem of looking for just cliques.
A brute force approach requires $n^{O(\log n)}$ time and the main result of this paper is that there is no shorter resolution proof, \emph{no matter what is the graph}.

Ramsey theorem is a fundamental result in mathematics. Its simplest version claims that there are smallest numbers $r(k)$ such that all graphs of order $r(k)$ contain either a clique or an independent set of size $k$.
Computing the actual value of $r(k)$ is very challenging, and so far only few points have been discovered.  For this reason there is great interest in asymptotic estimates~\cite{erdos1947some,spencer77,conlon2009}.
%
To lower bound $r(k)$ it is sufficient to exhibit a graph without large cliques or independent sets.  Unfortunately it is very hard to construct such graph explicitly\footnote{%
  Basic union bound shows that a uniformly chosen random graph of $n$ vertices has no clique nor independent set of size $2 \log n$. A corollary of that is ``$r(k) \geq 2^{\frac{k}{2}}$''.}.  A proof can be given that the graph has indeed the desired property.  The graph description plus such proof witnesses the lower bound for $r(k)$.

Proof complexity so far has focused almost exclusively on upper bounds of $r(k)$, which we do not discuss here.  Krishnamurthy and Moll~\cite{km81} proved partial results on the complexity of proving the exact upper bound, and conjecture this formula to be hard in general. \Krajicek later proved an exponential lower bound on the length of bounded depth Frege proofs~\cite{kra10}, for the same statement.  Weaker estimates of $r(k)$ have short proofs in a relatively weak fragment of sequent calculus (namely, any formula in the proof has bounded depth)~\cite{pudlak91,kra10}.  It is not clear how strong the proof system must be in order to prove efficiently  this statement. Recently \Pudlak has shown that resolution is not enough, proving that the length of a resolution proof of ``$r(k) \leq 4^{k}$'' must be exponential in the length of the formula itself (see~\cite{pudlak2012ramsey}).
%
 There are also results on the off-diagonal Ramsey numbers $r(k,s)$ where cliques of size $k$ and independent sets of size $s$ are forbidden.  Such Ramsey numbers are not discussed here. For estimates See~\cite{es35,aks80,kim95,bohman2010early} for estimates, and~\cite{cgl10} for resolution lower bounds.


\textbf{Our results:} we present a cnf encoding of the statement ``graph $G$ is Ramsey'' (we discuss the merits of this encoding in the open problem section).
It is well known that random graph with edges picked uniformly and independently at random is Ramsey with high probability.  Nevertheless for most of such graphs proving the Ramsey property requires resolution proofs of length $n^{\frac{\log n}{9}}$.
Given such result, it is natural to ask whether the proof is easy for any Ramsey graph. It is not the case: resolution proofs of length $n^{\Omega(\log n)}$ are necessary for any Ramsey graph.
How much this result depends on the encoding of the formula? In the
technical appendix of this paper we give a partial answer to the
question, showing that in a different encoding, more convenient for
SAT solvers, no short tree-like refutation is possible. We leave as an
open problem the case for general resolution.
}

\section{Definitions and results}


Resolution~\cite{Blake1937} is a system for refuting propositional
CNFs, that is, propositional formulas in conjunctive normal form. A
resolution refutation is a sequence of disjunctions, which
in this context we call \emph{clauses}. Resolution has a
single inference rule: from two clauses $A \lor x$ and $B \lor \neg x$
we can infer the new clause $A \lor B$ (which is a logical
consequence).  A \emph{resolution refutation} of a CNF $\phi$ is a derivation
of the empty clause from the clauses of $\phi$.
For an unsatisfiable formula $\phi$ we define $L(\phi)$ to be the length,
that is, the number of clauses, of the shortest resolution refutation of $\phi$.
If $\phi$ is satisfiable we consider $L(\phi)$ to be infinite.

Let $c>0$ be a constant, whose value will be fixed for the rest of the paper.

\begin{definition}[Ramsey graph]
  We say that a graph with $n$ vertices is \mbox{\emph{$c$-Ramsey}} if there is no set of $c \log n$ vertices
	which form either a clique or an independent set.
\end{definition}

We now describe how we formalize this in a way suitable for the resolution
proof system.
Given a graph $G$ on $n=2^k$ vertices,
we will define a formula $\ram_G$ in conjunctive normal form
which is satisfiable if and only there is a homogeneous set of size $ck$ in $G$,
that is, if and only if $G$ is not Ramsey.
We identify the vertices of $G$ with the binary strings of length $k$. In this
way we can use an assignment to $k$ propositional variables to determine a vertex.

The formula $\ram_G$ has variables to represent an injective mapping from a set of
$ck$ ``indices'' to the vertices of $G$, and asserts that the vertices mapped
to form either a clique or an independent set. It has a single extra
variable $y$ to indicate which of these two cases holds.

In more detail, for each $i \in [ck]$ we have $k$ variables $x^i_1, \dots, x^i_k$
which we think of as naming, in binary, the vertex of $G$ mapped to by $i$.
We have an additional variable $y$, so there are $ck^2+1$ variables in total.
To simplify notation we will write propositional literals in the form
``$x^i_b=1$'', ``$x^i_b\not=0$'', ``$x^i_b=0$'' and ``$x^i_b\not=1$''.
The first and the second are aliases for the literal $x^i_b$.
The third and the fourth are aliases for literal $\neg x^i_b$.

The formula $\ram_G$ then consists of clauses asserting the following:

\begin{enumerate}
\item
\textbf{The map is injective.}
For each vertex $v\in V(G)$, represented as $v_{1} \cdots v_{k}$ in binary,
and each pair of distinct $i,j \in [ck]$,  we have the clause
\[ \label{eq:clause_injective}
\bigvee_{b=1}^{k} (x^i_b \not= v_b) \lor \bigvee_{b=1}^{k} (x^j_b \not= v_b).
\]
These clauses guarantee that no two indices $i$ and $j$
map to the same vertex~$v$.

\item
\textbf{If $y=0$, then the image of the mapping is an independent set.}
For each pair of distinct vertices $u,v \in V(G)$, represented respectively
as $u_1 \dots u_k$ and $v_1 \dots v_k$, and each pair of distinct $i,j \in [ck]$,
if $\{ u, v \} \in E(G)$ we have the clause
\[
y \vee \bigvee_{y=1}^{k} (x^i_b \not= u_b) \lor \bigvee_{b=1}^{k} (x^j_b \not= v_b).
\]
These clauses guarantee that, if $y=0$, then no two indices are mapped
to two vertices with an edge between them.

\item
\textbf{If $y=1$, then the image of the mapping is a clique.}
For each pair of distinct vertices $u,v \in V(G)$, represented respectively
as $u_1 \dots u_k$ and $v_1 \dots v_k$, and each pair of distinct $i,j \in [ck]$,
if $\{ u, v \} \notin E(G)$ we have the clause
\[
\neg y \vee \bigvee_{b=1}^{k} (x^i_b \not= u_b) \lor \bigvee_{b=1}^{k} (x^j_b \not= v_b).
\]
These clauses guarantee that, if $y=1$, then no two indices are mapped
to two vertices without an edge between them.
\end{enumerate}
Notice that the formula has $\binom{ ck}{2 }\left( 1+ \binom{n}{2} \right)$ clauses
in total,
and so is unusual in that the number of clauses is exponentially
larger than the number of variables.
However the number of clauses is polynomial in the number $n$ of vertices of $G$.

If $G$ is Ramsey, then $\ram_G$ is unsatisfiable and only has
$c \log^2 n +1$ variables. So we can refute $\ram_G$ in quasipolynomial
size by a brute-force search through all assignments:

\begin{proposition} \label{pro:bruteforceupperbound}
If $G$ is $c$-Ramsey, the formula $\ram_G$ has a (treelike) resolution refutation of size
$n^{O(\log n)}$.
\end{proposition}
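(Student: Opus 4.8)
The plan is to exploit the small number of variables. The formula $\ram_G$ has only $ck^2 + 1$ variables, where $k = \log n$, so it has at most $N := ck^2 + 1 = O(\log^2 n)$ variables. A standard fact about resolution is that any unsatisfiable CNF in $N$ variables has a treelike resolution refutation of size $O(2^N)$: one simply builds the complete decision tree that queries all $N$ variables in some fixed order. Along every root-to-leaf branch the assignment is total, hence falsifies some clause of $\ram_G$ (since $\ram_G$ is unsatisfiable, because $G$ is $c$-Ramsey); labelling that leaf with the falsified clause and reading the tree bottom-up, each internal node becomes a clause obtained by resolving its two children on the variable queried there, and the root is the empty clause. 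The size of this refutation is at most $2^{N+1} - 1$.

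Now I just need to check that $2^{O(\log^2 n)}$ is of the form $n^{O(\log n)}$. Indeed $2^{N} = 2^{ck^2+1} = 2 \cdot (2^k)^{ck} = 2 \cdot n^{c \log n}$, so the refutation has size $n^{O(\log n)}$ as claimed, and it is treelike by construction.

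I do not anticipate any real obstacle here: the only mild point to state carefully is the passage from ``complete decision tree over all variables'' to ``treelike resolution refutation,'' i.e.\ that resolving the two child clauses at each node on the split variable is always a legal resolution step and yields a clause falsified by the partial assignment at that node. This is the textbook equivalence between treelike resolution and decision trees (search trees) for the falsified-clause search problem, and it applies verbatim since $\ram_G$ is a genuine CNF with a bounded variable set. No property of $G$ is used beyond the fact that $G$ being $c$-Ramsey makes $\ram_G$ unsatisfiable, which was already established when $\ram_G$ was defined.
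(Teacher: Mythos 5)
Your proof is correct and is essentially the same brute-force argument the paper gestures at (the paper states only that one can "refute $\ram_G$ in quasipolynomial size by a brute-force search through all assignments"). You spell out the standard decision-tree-to-treelike-resolution translation and the arithmetic $2^{ck^2+1} = 2 \cdot n^{c\log n}$, which is exactly what is intended.
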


At this point, we should recall the formalization of the Ramsey theorem that
is more usually studied in proof complexity.
This is the family $\mathrm{RAM}_n$
of propositional CNFs, where $\mathrm{RAM}_n$ has one variable for each
distinct pair of points in $[n]$ and asserts that the graph represented
by these variables is $\frac{1}{2}$-Ramsey.
Hence $\mathrm{RAM}_n$ is satisfiable if and only if any $\frac{1}{2}$-Ramsey graph
on $n$ vertices exists.
In contrast, our formula $\ram_G$ is satisfiable
if and only if our particular graph $G$ is not $c$-Ramsey.

Put differently, a refutation of $\mathrm{RAM}_n$ is a proof that
$r(k) \le 2^{2k}$. This was recently shown to require exponential size
(in $n$) resolution refutations \cite{pudlak2012ramsey}.
On the other hand a refutation of $\ram_G$ is a proof that $G$ is $c$-Ramsey, and hence
that $G$ witnesses that $r(k) > 2^{\frac{k}{c}}$.

We now state our main result. We postpone the proof to Section 3.

\begin{theorem} \label{the:ramseylowerbound}
Let $G$ be any graph with $n$ vertices. Then $L (\ram_G)  \ge n^{\Omega( \log n )}$.
\end{theorem}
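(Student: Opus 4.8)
The plan is to reduce the general case to the ``random-like'' case already handled in Section 2. The key external input is the theorem of Prömel and Rödl~\cite{promel1999non}: every $c$-Ramsey graph $G$ on $n$ vertices contains an induced subgraph $H$ on $m$ vertices, with $m = n^{\Omega(1)}$, which has the relevant statistical density property — roughly, that every sufficiently large vertex subset of $H$ spans a number of edges bounded away from $0$ and from $\binom{|S|}{2}$, i.e.\ $H$ looks locally like the random graph $G(m,1/2)$. Since a short refutation of $\ram_G$ should not help refute the corresponding formula for an arbitrary induced subgraph, the first step is to establish a \textbf{restriction lemma}: if $H$ is an induced subgraph of $G$ on $m = 2^{\ell}$ vertices, then $L(\ram_H) \le L(\ram_G)$, or at worst $L(\ram_H) \le \operatorname{poly}(n)\cdot L(\ram_G)$. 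The mechanism is a partial assignment to the $x$-variables: we have $ck$ indices in $\ram_G$ but only need $c\ell$ of them for $\ram_H$, and we want to pin the remaining indices to fixed vertices of $G\setminus H$ (or pin some high-order bits of all indices so that every index is forced into $H$). One has to check that under such a restriction the clauses of $\ram_G$ either become trivially true or project exactly onto the clauses of $\ram_H$ (modulo the identification of $V(H)$ with binary strings of length $\ell$), using that $H$ is \emph{induced} so that adjacency in $H$ agrees with adjacency in $G$. A resolution refutation is closed under restrictions, so this gives the transfer.

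The second step is to \textbf{invoke the Section~2 argument on $H$}. Section 2 proves the bound $L(\ram_{G'}) \ge m^{\Omega(\log m)}$ for $G'$ a random graph; inspecting that proof, what it actually uses is not genuine randomness but only the pseudorandom density property — the same property that \cite{promel1999non} guarantees for $H$. So I would either (a) state the Section~2 result in the stronger form ``$L(\ram_{G'}) \ge m^{\Omega(\log m)}$ for every graph $G'$ on $m$ vertices satisfying the density condition,'' and then simply apply it to $H$; or (b) re-run the width/random-restriction or bottleneck-counting argument of Section~2 directly against $\ram_H$, feeding in the density bounds wherever independence of edges was used. Combining with the restriction lemma yields
\[
L(\ram_G) \ \ge\ L(\ram_H) \ \ge\ m^{\Omega(\log m)} \ =\ \bigl(n^{\Omega(1)}\bigr)^{\Omega(\log n)} \ =\ n^{\Omega(\log n)},
\]
where the middle step uses $\log m = \Omega(\log n)$ because $m = n^{\Omega(1)}$; this completes the proof.

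The main obstacle I expect is the restriction lemma — specifically, making sure the restriction really produces \emph{exactly} $\ram_H$ and not some weakened or strengthened formula. Two subtleties: first, the bit-length changes from $k$ to $\ell$, so the restriction must fix the top $k-\ell$ bits of every index variable in a way consistent with $V(H)$ being a ``sub-cube'' of $V(G)$ — but $V(H)$ is an arbitrary $m$-subset, not a subcube, so instead one probably keeps all $k$ bits and adds to the formula, or derives from the refutation, the clauses saying each index lands in $V(H)$; one must verify these auxiliary clauses are cheap (there are only $\operatorname{poly}(n)$ of them and they have short derivations, or can be absorbed as extra hypotheses without affecting the asymptotics). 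Second, one must check the $y$-variable and the injectivity clauses behave correctly under the projection. The other place that needs care is confirming that the Section~2 lower bound proof is genuinely robust enough to run on any density-pseudorandom graph rather than a truly random one; if the Section~2 proof uses, say, a union bound over all potential small cliques with independent edge-events, one needs the pseudorandomness to be strong enough (the right quantitative form of ``every large set has intermediate edge density'') to replace that — which is exactly why the precise statement imported from \cite{promel1999non} matters and should be quoted in the form the argument consumes.
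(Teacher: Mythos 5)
Your high-level plan (invoke Pr\"omel--R\"odl, reuse the Section~2 machinery) is the right one, but the ``restriction lemma'' you would need does not exist in the form you assume, and the proposal skips the central technical step of the paper's argument. The formulas $\ram_G$ and $\ram_H$ live over incompatible variable sets --- $ck$ indices of $k$ bits each versus $c\ell$ indices of $\ell$ bits each --- and since $S=V(H)$ is an arbitrary $m$-subset of $\{0,1\}^k$ rather than a subcube, no partial assignment to the $x^i_b$'s projects $\ram_G$ onto $\ram_H$. You also cannot pin the surplus indices to fixed vertices outside $H$: a $c$-Ramsey graph has no $ck$-clique, so no fixed choice of extra vertices can coexist with every clique inside $H$. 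Adding axioms forcing each index into $S$ does give a formula $\phi$ with $L(\phi)\le L(\ram_G)$, but $\phi$ is not $\ram_H$ (it still has $ck$ indices of $k$ bits), so you cannot ``apply Section~2 to $H$''; you would have to give an Adversary strategy for $\phi$ directly, which is essentially what the paper does for $\clique(G)$ itself. The paper never reduces to a subgraph formula: the Adversary simply resolves to keep all its committed vertices inside a subset of~$S$ while playing the game on $\clique(G)$.

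The genuinely missing idea is what makes that resolve tenable, and the paper flags it explicitly. The Section~2 game needs the Adversary's target set to be well spread over the hypercube: for every pattern $p$ with $|p|\le\epsilon k$, the set $C_p$ of vertices consistent with $p$ must meet the target set in many vertices, or else the Prover can query and forget bits so as to corner the Adversary into a thin slice where Corollary~\ref{cor:dense_extension} no longer applies. Property~P of Lemma~\ref{lem:propertyC} gives this automatically for the random graph, but the Pr\"omel--R\"odl set $S$ is arbitrary and can be very non-uniformly distributed over patterns. The paper repairs this by pruning $S$ down to $S^*$ so that every small pattern is either disjoint from $S^*$ or meets it in at least $m^{1-\alpha}$ vertices, and then balances $\alpha,\epsilon$ against $\beta,\delta$ via the inequalities $(\star)$ and $(\dagger)$ so that $S^*$ is nonempty and Corollary~\ref{cor:dense_extension} always has enough room. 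Your claim that ``the Section~2 proof is genuinely robust enough to run on any density-pseudorandom graph'' is not true as stated: the robustness has to be manufactured by the $S^*$ construction, and that construction --- not a restriction lemma --- is the actual content of the Section~3 proof.
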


If $G$ is not $c$-Ramsey then this is trivial, since $\ram_G$ is satisfiable
and therefore $L(\ram_G)$ is infinite by convention.
If $G$ is $c$-Ramsey, then by Proposition
\ref{pro:bruteforceupperbound} this bound is tight
and we know that  $L(\ram_G) = n^{\Theta(\log n)}$.

\subsection{Open problems} \label{sec:openproblems}

A shortcoming of our result is that our formula $\ram_G$
asserting that a graph is not $c$-Ramsey identifies the vertices
of $G$ with binary strings. It could be argued that this ``binary encoding''
of the statement brings some extra structure to the graph,
and that a formalization which does not do this is more
combinatorially natural.

So consider the ``unary encoding'' $\ram'_G$,
in which  the mapping from an index~$i$ to the vertices
of $G$ is represented by $n$ variables $\{p^i_v :  v \in V(G) \}$
and we have clauses asserting that for each $i$,
exactly one of the variables $p^i_v$ is true.
Otherwise the structure of $\ram'_G$ is similar to that of $\ram_G$.
As before, if $G$ is a $c$-Ramsey graph we have the brute-force upper bound
$L(\ram'_G) = n^{O(\log n)}$. But we are not able to prove
a superpolynomial lower bound on resolution size.
However if we restrict to treelike resolution, such a lower bound
follows using techniques from~\cite{bgl11paradpll}.  Here we are able
to prove the tree-like resolution lower bound as a corollary of our
main theorem (we are grateful to Leszek Ko{\l}odziejczyk for pointing
out this simpler proof).

\begin{theorem}
Let $G$ be any $c$-Ramsey graph with $n$ vertices. Then $\ram'_G$ requires
treelike resolution refutations of size $n^{\Omega(\log n)}$.
\end{theorem}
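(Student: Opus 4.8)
The plan is to deduce the theorem from Theorem~\ref{the:ramseylowerbound}. Note first that, since every treelike resolution refutation is in particular a resolution refutation, Theorem~\ref{the:ramseylowerbound} already gives $L(\ram_G) \ge n^{\Omega(\log n)}$ for the binary encoding itself. So it suffices to show: a treelike resolution refutation of $\ram'_G$ of size $S$ can be transformed into a resolution refutation of $\ram_G$ of size polynomial in $S$ and $n$. Combined with Theorem~\ref{the:ramseylowerbound}, this forces $S = n^{\Omega(\log n)}$.

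The transformation is guided by the obvious correspondence between the two encodings. Identify the vertices of $G$ with $\{0,1\}^k$ as in the definition of $\ram_G$, and associate to the unary literal $p^i_v$ the term $E_{i,v} := \bigwedge_{b=1}^{k}(x^i_b = v_b)$ (and to $\neg p^i_v$ the clause $\neg E_{i,v}$). Under this correspondence the injectivity clauses and the edge/non-edge clauses of $\ram'_G$ map precisely to the corresponding clauses of $\ram_G$; the ``at most one'' clauses $\neg p^i_v \lor \neg p^i_w$ map to tautologies, since distinct vertices differ in some bit; and the ``at least one'' clauses $\bigvee_v p^i_v$ also map to tautologies, since the vertices exhaust all $2^k$ bit-strings of length~$k$. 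Thus, at the level of axioms, $\ram'_G$ translates into $\ram_G$ together with trivial clauses. One then translates the refutation itself. It is cleanest to use the equivalence between treelike resolution size and decision-tree size for the false-clause search problem: a treelike refutation of $\ram'_G$ of size $S$ is a decision tree $T'$, with at most $S$ leaves, solving $\mathrm{Search}(\ram'_G)$. Feeding $T'$ an assignment of the form ``read off $p^i_v$ from the bits $x^i_1,\dots,x^i_k$'' produces, at every leaf, a falsified clause of $\ram'_G$; by the axiom-level correspondence this leaf is never an ``exactly one'' clause, and it points to a genuinely falsified clause of $\ram_G$. Implementing this — replacing each query ``$p^i_v$?'' by queries to the relevant variables $x^i_b$ — turns $T'$ into a search procedure for $\mathrm{Search}(\ram_G)$, and hence, after the standard clean-up, into a resolution refutation of $\ram_G$.

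The main obstacle is the size blow-up in this last step. A single query ``$p^i_v$?'' is a $\log n$-junta in the binary variables, but the decision tree $T'$ may be deep and may probe many of the $n$ vertices available at a given index, so a naive replacement costs a factor that can be as large as $n^{\Theta(\log n)}$ — which would completely swallow the $n^{\Omega(\log n)}$ bound we are aiming for. Keeping the overhead polynomial in $S$ and $n$ requires exploiting the treelike structure more carefully, essentially via the analysis underlying the techniques of~\cite{bgl11paradpll}. This is also precisely the point at which treelikeness is used, which is why the argument does not extend to general resolution refutations of $\ram'_G$ (the open problem of Section~\ref{sec:openproblems}).
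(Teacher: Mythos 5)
You have the right high-level plan --- deduce the theorem from Theorem~\ref{the:ramseylowerbound} by translating a small treelike refutation of $\ram'_G$ into a not-much-larger resolution refutation of $\ram_G$ via the substitution $p^i_v \mapsto \bigwedge_b (x^i_b = v_b)$, and the axiom-level bookkeeping you do is correct. But the route you choose for the translation --- going through the decision-tree view of treelike resolution and replacing each query ``$p^i_v$?'' by a small binary decision subtree --- runs into exactly the size blow-up you yourself flag, and you do not actually close that gap. Claiming it can be handled ``via the analysis underlying the techniques of~\cite{bgl11paradpll}'' is not a proof; a single $T'$-path of length $d$ can, after replacement, fan out into up to $(k+1)^{d}$ paths in the new tree (each ``no'' answer can be realized by $k$ different partial bit-patterns), and the resulting tree over the $x$-variables is only bounded by $2^{ck^2} = n^{ck}$ \emph{regardless of $S$}. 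Since Theorem~\ref{the:ramseylowerbound} only gives a lower bound of $n^{\Omega(k)}$ on this translated object, the decision-tree reduction does not by itself force $S$ to be large.

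The paper avoids this problem by not unpacking the $k$-conjunctions into bit-level queries at all. Instead, the substitution is carried out \emph{as a proof translation}: each resolution step on $p^i_v$ becomes a single $\Res(k)$ cut on the $k$-term $\bigwedge_b (x^i_b = v_b)$, so a size-$S$ treelike resolution refutation of $\ram'_G$ becomes a treelike $\Res(k)$ refutation of $\ram_G$ (together with easily-derived $k$-DNF tautologies replacing the ``exactly one'' clauses) of size polynomial in $S$ and $n$. One then invokes the standard fact, from~\cite{kra1994lower}, that a treelike depth-$(d+1)$ Frege proof can be converted into a daglike depth-$d$ Frege proof with only polynomial blow-up; specialized to our setting, this turns the treelike $\Res(k)$ refutation into a daglike resolution refutation of $\ram_G$ of size $\mathrm{poly}(S,n)$, to which Theorem~\ref{the:ramseylowerbound} applies. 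This is the missing ingredient: the ``treelike'' hypothesis is what permits the depth-reduction step, and it is the $\Res(k)$ intermediate system, not the decision-tree picture, that keeps the translation polynomial. As written, your argument has a genuine gap at precisely the step you identify as the main obstacle.
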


\def\Res{\mathrm{Res}}

\begin{proof} (Sketch)
Suppose we have a small treelike resolution refutation of the unary formula
$\ram'_G$. We can produce from it an at most polynomially larger treelike
$\Res(k)$ refutation of the binary formula $\ram_G$ as follows.
Replace each variable $p^i_v$ asserting that index $i$ is mapped to vertex $v$
with the conjunction $\bigwedge_{b=1}^k x^i_b = v_b$.
The substitution instance of $\ram'_G$ is then almost identical to the
$\ram_G$, except for the additional clauses asserting that every index
maps to exactly one vertex; but these are easy to derive in
 treelike $\Res(k)$.

It is well-known that every treelike depth $d+1$ Frege proof can be made
into a daglike depth $d$ Frege proof with at most polynomial increase in size \cite{kra1994lower}.
In particular, we can turn our treelike $\Res(k)$ refutation of
$\ram_G$ into a resolution refutation. The lower bound then follows from
Theorem \ref{the:ramseylowerbound}.
\end{proof}

Lower bounds for daglike resolution would have interesting
consequences for various area of proof
complexity~\cite{aft11,dms2011j}. This is related to the following open problem
(rephrased from \cite{bglr12}):
  consider a random graph $G$ distributed according to $\mathcal{G}(n,n^{-(1+\epsilon)\frac{2}{k-1}})$ for some $\epsilon>0$. Does every resolution proof that there is no $k$-clique in $G$
 require size $n^{\Omega(k)}$? For tree-like resolution this problem
 has been solved in~\cite{bgl11paradpll}.

\subsection{Resolution width and combinatorial games}
\label{sec:game}

The \emph{width} of a clause is the number of literals it contains.
The width of a CNF $\phi$ is the width of its widest clause. 
Similarly the width of a resolution refutation $\Pi$ is the width of its widest clause.
The width of refuting an unsatisfiable CNF $\phi$
is the minimum width of $\Pi$ over all refutations $\Pi$ of $\phi$.
We will denote it by~$W(\phi)$.

A remarkable result about resolution is that it is possible to lower bound the proof length by lower bounding the proof width.

\begin{theorem}[\cite{bw99}]\label{thm:size_width}
  For any CNF $\phi$ with $m$ variables and width $k$,
\[
  L(\phi) \geq 2^{\Omega \left(
 \tfrac{{(W(\phi)-k)}^2}{m}
 \right)}.
\]
\end{theorem}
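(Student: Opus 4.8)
The plan is to prove the standard size--width tradeoff in the usual direction: if $\phi$ has a short refutation then it has a narrow one. Fix a threshold $d$ with $0<d<2m$, call a clause \emph{fat} if it has more than $d$ literals, and put $b:=(1-\tfrac{d}{2m})^{-1}>1$. The heart of the argument is the following lemma, which I would prove by induction: \emph{if $\psi$ is a CNF whose variables lie among the $m$ variables of $\phi$, with axioms of width at most $k$, and $\psi$ has a refutation containing fewer than $b^{a}$ fat clauses for some integer $a\ge 0$, then $W(\psi)\le k+d+a$.}

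For the lemma I would induct on the number of variables of $\psi$, with a secondary induction on $a$. When $a=0$ there are no fat clauses, so the given refutation already has width at most $\max(k,d)\le k+d$. For the inductive step, take a refutation $\pi$ of $\psi$ with a set $F$ of fewer than $b^{a}$ fat clauses. Counting literal occurrences inside $F$: each member contributes more than $d$ of them and there are at most $2m$ literals available, so by averaging some literal $\ell$ occurs in more than $\tfrac{d}{2m}\lvert F\rvert$ members of $F$. Write $\psi_{0},\psi_{1}$ for the CNFs obtained by setting $\ell$ to $0$ and to $1$. Restricting $\pi$ by $\ell=1$ deletes every clause containing $\ell$, in particular all those fat clauses, and can only shrink the rest, so it gives a refutation of $\psi_{1}$ with fewer than $\lvert F\rvert(1-\tfrac{d}{2m})=\lvert F\rvert/b<b^{a-1}$ fat clauses over one fewer variable; by induction $W(\psi_{1})\le k+d+(a-1)$. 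Restricting by $\ell=0$ merely removes the variable $\ell$, leaving a refutation of $\psi_{0}$ with still fewer than $b^{a}$ fat clauses over one fewer variable, so $W(\psi_{0})\le k+d+a$. Finally I would glue the two narrow refutations: re-insert the literal $\neg\ell$ into every clause of the refutation of $\psi_{1}$ that descends from an axiom obtained by deleting $\neg\ell$ from an axiom of $\psi$; since that refutation contains no occurrence of $\ell$, no inference is a resolution on $\ell$, so the result is a legal derivation from $\psi$ whose last clause is a subclause of $\{\neg\ell\}$, of width at most $W(\psi_{1})+1$. If that clause is empty we are done; otherwise resolve the unit clause $\{\neg\ell\}$ against the axioms of $\psi$ containing $\ell$ to obtain exactly the axioms of $\psi_{0}$, and append the refutation of $\psi_{0}$. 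This yields $W(\psi)\le\max\{k,\,W(\psi_{1})+1,\,W(\psi_{0})\}\le k+d+a$, closing the induction.

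To obtain the theorem, apply the lemma to $\phi$ itself: a shortest refutation, of length $L:=L(\phi)$, has at most $L$ fat clauses regardless of $d$, so we may take $a=\lceil\log_{b}L\rceil$. Using $\ln b=-\ln(1-\tfrac{d}{2m})\ge\tfrac{d}{2m}$ gives $a\le 1+\tfrac{2m\ln L}{d}$, hence $W(\phi)-k\le d+1+\tfrac{2m\ln L}{d}$. Choosing $d$ of order $\sqrt{m\ln L}$ to balance the two variable terms makes the right-hand side $O(\sqrt{m\ln L})$, and squaring and rearranging gives $L\ge 2^{\Omega((W(\phi)-k)^{2}/m)}$. (The only range this leaves out is when the balancing choice would force $d\ge 2m$, i.e.\ $\ln L\ge 2m$; but then $L$ is already $2^{\Omega(m)}$, whereas $(W(\phi)-k)^{2}/m\le m$ because every non-tautological clause has at most $m$ literals, so the bound holds trivially.)

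I expect the main obstacle to be making the gluing step fully rigorous. One must check that re-inserting $\neg\ell$ throughout the $\ell=1$ restricted refutation really produces a valid resolution derivation from $\psi$ — this works precisely because that refutation mentions no occurrence of $\ell$, so every inference stays a sound resolution step — that its endpoint is a subclause of $\{\neg\ell\}$, and that the clauses it produces, together with the axioms of $\psi_{0}$ re-derived from $\{\neg\ell\}$, genuinely let one run the $\psi_{0}$-refutation within width $k+d+a$. The averaging estimate and the optimisation of $d$ are then routine.
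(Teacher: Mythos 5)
Your proof is correct and is essentially the standard Ben-Sasson--Wigderson argument that the cited paper \cite{bw99} gives; the present paper only states this theorem as an imported result and does not prove it. Your key lemma (the double induction on the number of variables and on the fat-clause budget $b^a$, with the averaging step to find a popular literal, the two restrictions, and the gluing of the two narrow refutations), together with the optimisation $d\approx\sqrt{m\ln L}$ and the edge case for $d\ge 2m$, all match the original argument.
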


\ignore{
There are several combinatorial games that model different complexity measures for resolution~\cite{et99,et03,bg03b,ad08,pudlak00,bgl10c}.
 In particular it is possible to characterize resolution width by
the \emph{Boolean existential pebble game}~\cite{ad08}.
[** I don't see the need for this paragraph, but I'm not sure
what name or attribution to give to the following game and theorem.
Calling it the Boolean existential pebble game certainly seems too much.
I am tempted to call the theorem folklore, but it may be that
Atserias-Dalmau is the right reference for the precise result in both
directions.  - Neil **]
}

Now consider a game played between two players, called the Prover and the Adversary.
The Prover claims that a CNF $\phi$ is unsatifiable and
the Adversary claims to know a satisfying assignment.
At each round of the game the Prover asks for the value of some variable
and the Adversary has to answer.  The Prover saves the answer in memory,
where each variable value occupies one memory location.
The Prover can also delete any saved value, in order to save memory.
If the deleted variable is asked again, the Adversary is allowed to
answer differently.
The Prover wins when the partial assignment in memory falsifies a clause of $\phi$.
The Adversary wins if he has a strategy to play forever.

If $\phi$ is in fact unsatisfiable, then the Prover can always eventually win,
by asking for the total assignment. If $\phi$ is satisfiable, then
there is an obvious winning  strategy for the Adversary (answering according to a fixed satisfying assignment).
However, even if $\phi$ is unsatisfiable,
it may be that the Prover cannot win the game unless he uses a large amount of memory.
Indeed, it turns out that smallest number of memory locations that the Prover needs to win the game for an unsatisfiable $\phi$ is related to the width of resolution refutations.
(We only need one direction of this relationship -- for a converse see \cite{ad08}.)

\begin{lemma}\label{lem:game_width}
  Given an unsatisfiable CNF $\phi$, it holds that $W(\phi)+1$ memory locations are sufficient for the Prover in order to win the game against any Adversary.
\end{lemma}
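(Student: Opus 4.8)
The plan is to read a Prover strategy off a width-bounded refutation. Fix a resolution refutation $\Pi$ of $\phi$ of width $w = W(\phi)$ and regard it as a directed acyclic graph: the leaves are clauses of $\phi$, the unique root is the empty clause, and each internal node, labelled by a clause $C = A \lor B$, has edges to its two premises $A \lor x$ and $B \lor \neg x$ for the variable $x$ resolved on. The Prover's strategy keeps a ``current node'' of $\Pi$, labelled by a clause $C$, and maintains the invariant that the partial assignment stored in memory has domain exactly $\mathrm{vars}(C)$ and falsifies $C$ (that is, sets every literal of $C$ to false). Since $|C| \le w$, between moves the Prover uses at most $w$ memory locations; I will show that performing a move needs at most one more.

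Initially the current node is the root, so $C$ is the empty clause, the memory is empty, and the invariant holds trivially. Suppose at some point the current clause is $C$ and the invariant holds. If $C$ is a leaf, i.e.\ a clause of $\phi$, then the memory falsifies a clause of $\phi$ and the Prover has won. Otherwise $C = A \lor B$ is resolved from $A \lor x$ and $B \lor \neg x$. The stored assignment falsifies $C$, hence falsifies both $A$ and $B$. The Prover asks the Adversary for the value of $x$; note $x \notin \mathrm{vars}(C)$, so this is a fresh location and the memory now holds $|\mathrm{vars}(C)| + 1 \le w+1$ values. If the Adversary answers $x = 0$ then the current assignment falsifies the literal $x$ as well as $A$, so it falsifies $A \lor x$; the Prover moves the pointer to that premise and deletes from memory every variable in $\mathrm{vars}(A \lor B) \setminus \mathrm{vars}(A \lor x)$, which restores the invariant (the remaining domain is exactly $\mathrm{vars}(A \lor x)$, of size $\le w$). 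Symmetrically, if the Adversary answers $x = 1$ the assignment falsifies $B \lor \neg x$ and the Prover moves there, deleting the variables not occurring in that premise. In either case the invariant is reestablished using at most $w$ locations, the peak usage during the move being $w+1$.

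Finally, termination: each move replaces the current node by one of its predecessors in the acyclic graph $\Pi$, so (for instance, by induction on the length of the longest path from the current node to a leaf) after finitely many moves the current node is a leaf and the Prover wins. There is no real obstacle here beyond this bookkeeping; the only points requiring a little care are checking that the single extra memory cell is indeed enough — i.e.\ that after querying $x$ and moving to a premise one can always prune the domain back down to the variables of a width-$\le w$ clause — together with the observation that $\mathrm{vars}(A \lor x) \subseteq \mathrm{vars}(C) \cup \{x\}$ and $\mathrm{vars}(B \lor \neg x) \subseteq \mathrm{vars}(C) \cup \{x\}$, so that the variables to be deleted are precisely those present in memory but absent from the new clause.
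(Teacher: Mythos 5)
The paper does not actually prove this lemma; it states it and points to Atserias--Dalmau~\cite{ad08} for the width/game correspondence, using only the direction you prove. Your argument is a correct, self-contained proof of that direction, and it is the standard one: fix a width-$W(\phi)$ refutation, walk from the empty clause toward an axiom while maintaining in memory an assignment whose domain is exactly $\mathrm{vars}(C)$ and which falsifies the current clause $C$, query the resolved variable $x$ (which is fresh since $x\notin\mathrm{vars}(C)$, giving the $+1$), move to whichever premise the Adversary's answer falsifies, and prune using $\mathrm{vars}(A\lor x)\subseteq\mathrm{vars}(C)\cup\{x\}$ and $\mathrm{vars}(B\lor\neg x)\subseteq\mathrm{vars}(C)\cup\{x\}$. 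Termination follows because the refutation DAG is finite and acyclic, and the walk reaches an axiom, which is then falsified by the stored assignment. Nothing is missing.
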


\subsection{The clique formula}\label{sec:cliqueformula}

For any graph $G$, the formula $\ram_G\!\!\restriction_{y=1}$
is satisfiable if and only if $G$ has a clique of size $ck$.
We will call this restricted formula $\clique(G)$.
Dually, $\ram_G\!\!\restriction_{y=0}$ is equivalent to $\clique(\bar{G})$.
Since fixing a variable in a resolution refutation results in a refutation for the
corresponding restricted formula, we have
\begin{equation*} \label{eq:reduction_from_clique}
  \max\left\{L(\clique(G)),L(\clique(\bar{G}))\right\} \leq  L(\ram_G) .
\end{equation*}
Furthermore we can easily construct a refutation of $\ram_{G}$ from
refutations of $\ram_{G}\!\!\restriction_{y=1}$ and $\ram_{G}\!\! \restriction_{y=0}$.
In this way we get
\begin{equation*}
L(\ram_{G})  \leq L(\clique(\bar{G}))+L(\clique(G)) + 1.
\end{equation*}

We can now describe our high-level approach.
To lower-bound $L(\ram_G)$ it is enough to lower-bound $L(\clique(G))$,
which we will do indirectly by exhibiting a good strategy for the Adversary
in the game on $\clique(G)$.
This game works as follows: the Adversary claims
to know $ck$ strings in $\{ 0,1 \}^k$ which
name $ck$ vertices in $G$ which form a clique.
The Prover starts with no knowledge of these strings
but can query them, one bit at a time,
and can also forget bits to save memory.
The Prover wins if at any point there are two
fully-specified strings for which the corresponding vertices
 are not connected by an edge in $G$.

We will give a strategy for the Adversary  which will
beat any Prover limited to $\epsilon k^2$ memory for a constant $\epsilon>0$.
It follows by Lemma~\ref{lem:game_width} that $\clique(G)$
is not refutable in width $\epsilon k^2$.
The formula $\clique(G)$ has $ck^2$ variables and has
width $2k$. Hence applying Theorem~\ref{thm:size_width}
we get
\[
L(\ram_G)
\geq
L(\clique(G))
 \geq
 2^{\Omega \left(
 \tfrac{(\epsilon k^2 - 2k)^2}{ck^2}
 \right)}
\geq
2^{\Omega(k^2)}
\geq
n^{\Omega( \log n )}.
\]

\subsection{Other notation}
We will consider simple graphs with $n=2^{k}$ vertices. We
identify the vertices with the binary strings of length $k$.
For any vertex $v \in G$ we denote its binary representation by
 $ v_{1}\cdots v_{k} $.

A \emph{pattern} is a partial assignment to $k$ variables.
Formally, it is
 a string $p = p_{1} \cdots p_{k}  \in {\{*,0,1\}}^{k}$,
and we say that $p$ \emph{is consistent with} $v$ if for all $i\in[k]$ either
$p_{i}=v_{i}$ or $p_i = *$.
The \emph{size} $|p|$ of $p$ is the number of bits set to $0$ or $1$.
The \emph{empty pattern} is a string of $k$ stars.

For any vertex $v \in V(G)$ we let $\NB(v)$ be the set
$\{u \big| \{v,u\} \in E(G)\}$ of neighbours of $v$. Notice that $v\not\in\NB(v)$.
For any $U\subseteq V(G)$ we let $\NB(U)$ be the set of vertices of $G$ which
neighbour every point in $U$, that is, $\bigcap_{v \in U} \NB(U)$.
Notice that $U \cap \NB(U)=\emptyset$.

\section{Lower bounds for the random graph}\label{sec:random-graphs}

We consider random graphs on $n$ vertices given by
the usual distribution
$\mathcal{G}(n,\frac{1}{2})$ in the Erd\H{o}s-R\'{e}nyi model.

\begin{theorem}
If $G$ is a random graph, then with high probability
$L(\ram_G) = n^{ \Omega( \log n )}$.
\end{theorem}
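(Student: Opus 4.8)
\medskip\noindent\emph{Proof idea.} The plan is to apply the high-level approach of Section~\ref{sec:cliqueformula} to the random graph $G$: first lower-bound $W(\clique(G))$ by a strategy for the Adversary in the game on $\clique(G)$, then invoke Lemma~\ref{lem:game_width} together with Theorem~\ref{thm:size_width}, and finally use $L(\ram_G) \ge L(\clique(G))$. If $\clique(G)$ happens to be satisfiable then so is $\ram_G$ and $L(\ram_G)=\infty$, so we may assume $G$ contains no clique of size $ck$.

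The first step is to isolate the statistical property of $G$ that the Adversary will exploit. Fix a small constant $\epsilon<\tfrac14$ and set $t=\lceil\sqrt{\epsilon}\,k\rceil$. Call $G$ \emph{$\epsilon$-dense} if for every pattern $p$ with $|p|\le t$ and every set $U\subseteq V(G)$ with $|U|\le \epsilon k^2/t$, the number of vertices that are consistent with $p$ and also lie in $\NB(U)$ is at least $2^{\,k-|p|-|U|}/2$. Showing that a random $G$ is $\epsilon$-dense with high probability is routine: for fixed $p$ and $U$, the vertices consistent with $p$ and outside $U$ are, independently, in $\NB(U)$ with probability $2^{-|U|}$ (the relevant edge sets are pairwise disjoint), so a Chernoff bound bounds the failure probability for this $(p,U)$ by $\exp(-\Omega(2^{\,k-|p|-|U|}))$; since $|p|+|U|\le 2\sqrt{\epsilon}\,k+O(1)<k$ this is at most $\exp(-\Omega(2^{(1-2\sqrt{\epsilon})k}))$, which dominates a union bound over the $2^{O(k^2)}$ pairs $(p,U)$.

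The heart of the argument is the Adversary strategy against a Prover restricted to $\epsilon k^2$ memory locations, assuming $G$ is $\epsilon$-dense. The Adversary tracks the Prover's current memory $\rho$ and, for each index $i$, its restriction $\rho^i$ to the variables $x^i_1,\dots,x^i_k$. It declares $i$ \emph{committed} once $|\rho^i|\ge t$; writing $D$ for the set of committed indices we have $|D|\le \epsilon k^2/t\le \sqrt{\epsilon}\,k$. For each committed $i$ the Adversary stores a vertex $v_i$ consistent with $\rho^i$ and maintains the invariant that $\{v_i:i\in D\}$ is a clique of distinct vertices. A query to a committed index is answered from its $v_i$; a query to an uncommitted index that leaves it uncommitted is answered by $0$; and a query that makes an index $i$ committed is answered by using $\epsilon$-density --- applied to the pattern $\rho^i$ extended by this bit (size $\le t$) and to the set $\{v_j:j\in D\}$ (size $\le \epsilon k^2/t$) --- to pick a vertex $v_i$ consistent with $\rho^i$, adjacent to every previously committed vertex and distinct from all of them; such a $v_i$ exists because $\epsilon$-density keeps that common neighbourhood exponentially large. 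Forgetting a bit only shrinks $\rho$ and may uncommit an index, whose stored vertex is then discarded. One then checks that a clause of $\clique(G)$ can be falsified only when two \emph{fully} specified indices are set equal or to a non-adjacent pair; fully specified indices are committed, so the invariant forbids this and the Adversary plays forever. Hence the Prover needs more than $\epsilon k^2$ memory, and Lemma~\ref{lem:game_width} gives $W(\clique(G))>\epsilon k^2-1$.

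Finally, $\clique(G)$ has $ck^2$ variables and width $2k$, so Theorem~\ref{thm:size_width} yields
\[
L(\ram_G)\ge L(\clique(G))\ge 2^{\Omega\left((\epsilon k^2-2k)^2/(ck^2)\right)}=2^{\Omega(k^2)}=n^{\Omega(\log n)},
\]
using $n=2^k$. I expect the main obstacle to be the design of the ``commitment'' mechanism: the number of committed indices must stay safely below $k$, so that $\epsilon$-density keeps the common neighbourhood of the committed vertices large, but at the same time the Prover must be prevented from piling close to $k$ bits onto a single uncommitted index, since then only a couple of candidate vertices would remain and typically none of them lies in that common neighbourhood. Committing an index once it has accumulated about $\sqrt{\epsilon}\,k$ bits forces \emph{both} the number of committed indices and the size of every uncommitted pattern to be $O(\sqrt{\epsilon}\,k)$, and the requirement $2\sqrt{\epsilon}<1$ is exactly what lets $\epsilon$-density go through.
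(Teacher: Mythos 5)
Your proposal is correct and follows essentially the same approach as the paper: isolate a ``density'' property of the random graph (the paper's Lemma~\ref{lem:propertyC} with thresholds fixed at $\tfrac13 k$, your $\epsilon$-density with thresholds $\approx\sqrt{\epsilon}\,k$), show it holds with high probability, use it to build an Adversary strategy that fixes a vertex for an index once its pattern crosses the threshold while keeping the fixed vertices a clique via common neighbourhoods, and then convert the width lower bound to a size lower bound via Lemma~\ref{lem:game_width} and Theorem~\ref{thm:size_width}. The only cosmetic differences are that you parameterize the threshold and use a Chernoff bound to get exponentially many candidate vertices, whereas the paper picks the constant $\tfrac13$ outright and only needs (and only proves) the existence of a single candidate.
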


We will use the method outlined in Section~\ref{sec:cliqueformula}
above, so to prove the theorem it is enough to give a strategy for the Adversary
in the game on
$\clique(G)$ which forces the Prover to use a large amount of memory.
This is Lemma~\ref{lmm:main_lemma_rnd} below.
We first prove a lemma which captures the property of the random
graph which we need.

\begin{lemma}\label{lem:propertyC}
For a random graph $G$,
with high probability, the following property P holds.
Let \mbox{$U \subseteq V(G)$} with $|U| \le \frac{1}{3} k$ and let
$p$ be any pattern with $|p| \le \frac{1}{3}k$.
Then $p$ is consistent with at least one vertex in $\NB(U)$.
\end{lemma}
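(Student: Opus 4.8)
The plan is to bound the probability that property P fails, using a union bound over all choices of $U$ and $p$. Fix a set $U \subseteq V(G)$ with $|U| \le \frac13 k$ and a pattern $p$ with $|p| \le \frac13 k$. The number of vertices consistent with $p$ is exactly $2^{k - |p|} \ge 2^{2k/3}$, since $p$ fixes $|p|$ coordinates and leaves the rest free. Call this set of candidate vertices $V_p$. We want to show that, with very high probability, at least one vertex $v \in V_p$ lies in $\NB(U)$, i.e.\ is adjacent to every vertex of $U$.

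The key step is the following estimate. For a single vertex $v \in V_p \setminus U$, the events ``$\{v,u\} \in E(G)$'' for the various $u \in U$ are independent (distinct edges in $\mathcal{G}(n,\frac12)$), so $\Pr[v \in \NB(U)] = 2^{-|U|} \ge 2^{-k/3}$. Now $V_p$ contains at least $2^{2k/3} - \frac13 k$ vertices outside $U$, and — crucially — for two distinct vertices $v, v' \in V_p \setminus U$ the events ``$v \in \NB(U)$'' and ``$v' \in \NB(U)$'' depend on disjoint sets of edges (the edges from $v$ to $U$ versus the edges from $v'$ to $U$), hence are independent. Therefore
\[
\Pr[\,V_p \cap \NB(U) = \emptyset\,] \;\le\; \bigl(1 - 2^{-k/3}\bigr)^{\,2^{2k/3} - k/3} \;\le\; \exp\!\bigl(-2^{-k/3}\,(2^{2k/3} - k/3)\bigr) \;=\; \exp\!\bigl(-2^{k/3}(1-o(1))\bigr).
\]
This is doubly-exponentially small in $k$, i.e.\ superpolynomially small even compared to $n^{O(\log n)}$.

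Finally I would take a union bound. The number of choices of $U$ is at most $\sum_{j \le k/3}\binom{n}{j} \le n^{k/3} = 2^{k^2/3}$, and the number of patterns $p$ is at most $3^k$. Thus the probability that P fails for some $(U,p)$ is at most $2^{k^2/3} \cdot 3^k \cdot \exp(-2^{k/3}(1-o(1)))$, which tends to $0$ as $k \to \infty$ since the doubly-exponential term $\exp(-2^{k/3})$ dominates the $2^{O(k^2)}$ factor. Hence P holds with high probability. The only mild subtlety — and the one point to state carefully rather than a genuine obstacle — is the independence bookkeeping: making explicit that the edge set underlying ``$v \in \NB(U)$'' is $\{\{v,u\} : u \in U\}$, that these families are pairwise disjoint as $v$ ranges over $V_p \setminus U$, and that this justifies multiplying the failure probabilities over $v$; once this is set up, the computation is routine.
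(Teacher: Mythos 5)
Your proof is correct and takes essentially the same route as the paper: a union bound over pairs $(U,p)$, using that the events ``$v \in \NB(U)$'' for distinct $v \notin U$ are independent (disjoint edge sets), so that the failure probability for a fixed pair is at most $(1-2^{-k/3})^{2^{2k/3}-|U|} \le e^{-n^{1/3}(1-o(1))}$, which swamps the $n^{O(\log n)}$ many pairs. Your version is merely phrased in terms of $k$ instead of $n=2^k$, and you are a bit more explicit than the paper about the independence bookkeeping, which the paper leaves implicit.
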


\begin{proof}
Fix such a set $U$ and such a pattern $p$. The probability that an arbitrary
vertex $v \notin U$ is in $\NB(U)$ is at least
$2^{-\frac{1}{3}k} = n^{-\frac{1}{3}}$.
The pattern $p$ is consistent with at least $n^\frac{2}{3} - |U|$
vertices outside $U$. The probability that no vertex consistent with $p$
is in $\NB(U)$ is hence at most
\[
\left( 1 - n^{-\frac{1}{3}} \right)^{n^\frac{2}{3} - |U|}
\le e^{- n^\frac{1}{3}}.
\]
We can bound the number of such sets $U$ by $n^{\frac{1}{3}k}  \le n^{\log n}$
and the number of patterns $p$ by $3^k \le n^2$,
so by the union bound
property P fails to hold with probability
at most $2^{-\Omega(n^\frac{1}{3})}$.
\end{proof}

\begin{lemma}\label{lmm:main_lemma_rnd}
Let $G$ be any graph with property P.
Then there is an Adversary strategy in the game on $\clique(G)$
which wins against any Prover
who uses at most $\frac{1}{9}k^2$ memory locations.
\end{lemma}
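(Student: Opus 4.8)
The plan is to describe an Adversary strategy in the game on $\clique(G)$ that survives as long as the Prover's memory stays below $\frac19 k^2$. The Adversary will maintain, as a running state, the current set of queried bits in the Prover's memory, which specifies a partial assignment to some of the variables $x^i_b$; for each index $i$ this partial assignment is a pattern $p^i$, and the Adversary keeps track of the committed ``partial vertices'' $p^1,\dots,p^{ck}$. The key invariant the Adversary will maintain is: \emph{the set of indices $i$ whose pattern $p^i$ is completely specified (size $k$) has size at most $\frac13 k$, and each such complete pattern names a distinct vertex, and all those vertices are mutually adjacent; moreover for every other index $i$, the pattern $p^i$ has size at most $\frac13 k$.} I would first observe that if the total memory is below $\frac19 k^2$ then at most $\frac13 k$ indices can have more than $\frac13 k$ bits specified, which is what lets the Adversary keep the ``dangerous'' indices under control --- a simple counting/pigeonhole argument, since $\frac13 k \cdot \frac13 k = \frac19 k^2$.

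Next I would specify how the Adversary answers a query to a bit $x^i_b$. If index $i$ already has a complete pattern in memory, answer consistently with it. If the query would push index $i$ from ``small'' ($\le\frac13k$ bits) to ``just over'' --- but this cannot make it complete yet unless $k/3$ was already almost $k$, which by the memory bound it is not, so in fact the subtle case is when answering the query \emph{completes} some pattern $p^i$. Here is where property P is used: when the Adversary is about to commit index $i$ to a full vertex, let $U$ be the set of vertices named by the already-complete patterns of the \emph{other} indices; by the invariant $|U|\le\frac13k$, and the pattern $p^i$ restricted to the remaining free bits, together with the bit just queried, still has size $\le\frac13k$. By property P there is a vertex $v\in\NB(U)$ consistent with $p^i$; the Adversary picks such a $v$, answers the queried bit according to $v$, and records the choice. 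Since $v\in\NB(U)$, the new vertex is adjacent to all previously committed vertices, and since $U\cap\NB(U)=\emptyset$ it is distinct from all of them, so the clique invariant is preserved and no clause of $\clique(G)$ (which only complains about a non-edge between two fully specified vertices) is falsified.

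Finally I would handle forgetting: when the Prover deletes a bit, the Adversary simply updates the patterns, which only shrinks them and can only decrease the set of complete indices, so the invariant is trivially preserved --- and, importantly, if an index that was complete becomes incomplete, the Adversary is free to forget its committed vertex and re-choose later. Then I would argue that the Adversary can play forever: at every point the memory-in-use is below $\frac19 k^2$, the invariant holds, hence (checking each of the three clause types in $\ram_G\!\!\restriction_{y=1}$: injectivity is maintained because committed vertices are distinct, and the clique clauses are maintained because committed vertices are mutually adjacent) no clause is ever falsified, so the Prover never wins. The main obstacle, and the only place where anything nontrivial happens, is the commitment step: one has to be careful that at the moment of completing a pattern the relevant $U$ and residual pattern both have size $\le\frac13k$ so that property P applies --- this is exactly where the $\frac19k^2 = (\frac13k)^2$ memory bound is tuned. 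Everything else is bookkeeping.
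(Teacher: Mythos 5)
There is a genuine gap in the commitment scheme. Your strategy fixes a vertex $v^i$ for index $i$ only at the moment $p^i$ becomes a \emph{complete} pattern (all $k$ bits specified). But at that moment $|p^i| = k-1$ (plus the bit being queried), which is far larger than $\frac{1}{3}k$, so property~P cannot be invoked: P requires \emph{both} $|U| \le \frac{1}{3}k$ \emph{and} $|p| \le \frac{1}{3}k$. The phrase ``the pattern $p^i$ restricted to the remaining free bits, together with the bit just queried, still has size $\le \frac{1}{3}k$'' does not parse into anything that rescues this; when a pattern is about to complete, the pattern the Adversary must satisfy has nearly all $k$ bits set, and P gives you no vertex to choose.

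Relatedly, your stated invariant --- that every non-complete pattern has size at most $\frac{1}{3}k$ --- is not maintainable. Your pigeonhole observation correctly shows that at most $\frac{1}{3}k$ indices can have more than $\frac{1}{3}k$ bits in memory at once, but it does \emph{not} show that an index with more than $\frac{1}{3}k$ bits must be complete. The Prover can happily grow a single pattern to, say, $\frac{1}{2}k$ bits (well within the $\frac{1}{9}k^2$ budget) without completing it; your strategy gives no instruction for that state, and the invariant is already violated. The fix, which is what the paper does, is to commit \emph{early}: the Adversary fixes $v^i$ as soon as $|p^i|$ hits $\frac{1}{3}k - 1$ (i.e., the moment the pattern is about to leave the ``small'' regime), while $p^i$ is still small enough for property~P, and from then on answers queries for $i$ according to $v^i$ until forgetting shrinks $p^i$ back below the threshold. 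With early commitment, $U$ is the set of \emph{currently fixed} (not necessarily complete) indices, and $|U| \le \frac{1}{3}k$ follows from the memory bound because each fixed index occupies roughly $\frac{1}{3}k$ memory cells. Your bookkeeping for forgetting and for the clause-falsification check is otherwise fine, but without early commitment the central application of property~P fails.
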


\begin{proof}
For each index $i \in [ck]$, we will write $p^i$ for the pattern representing
the current information in the Prover's memory about the $i$th vertex.
The Adversary's strategy is to answer queries arbitrarily (say with $0$)
as long as the index $i$ being queried has $|p^i| < \frac{1}{3}k-1$.
If $|p^i| = \frac{1}{3}k-1$, the Adversary privately
\emph{fixes} the $i$th vertex to be some particular vertex $v^i$ of $G$
consistent with $p^i$, and then answers queries to $i$ according to
$v^i$ until, through the Prover forgetting bits, $|p^i|$ falls below
$\frac{1}{3}k$ again, at which point the Adversary considers
the $i$th vertex no longer to be fixed.

If the Adversary is able to guarantee that the set of currently
fixed vertices always forms a clique, then the Prover can never win.
So suppose we are at a point in the game where
the Adversary has to fix a vertex for index $i$, that is, where
the Prover is querying
a bit for  $i$ and $|p^i| = \frac{1}{3}k-1$.
Let $U \subseteq V(G)$ be the set of vertices that the
Adversary currently has fixed. It is enough to show that there is some
vertex consistent with $p^i$ which is connected by an edge in $G$
to every vertex in $U$.
But by the limitation on the size of the Prover's memory, no more than
$\frac{1}{3}k$ vertices can be fixed at any one time. Hence
$|U| \le \frac{1}{3}k$ and the existence
of such a vertex follows from property P.
\end{proof}

\section{Lower bounds for Ramsey graphs} \label{sec:ramsey-graphs}

We prove Theorem \ref{the:ramseylowerbound},
that for any $c$-Ramsey graph $G$ on $n$ vertices,
$L (\ram_G) \ge n^{\Omega( \log n )}$.
As in the previous section we will do this by showing,
in Lemma \ref{lmm:ramsey_strategy} below,  that the
Adversary has a strategy for the game on $\clique(G)$ which forces
the Prover to use a lot of memory.

\begin{definition}
Given sets $A, B \subseteq V(G)$ we define their mutual \emph{density} by
\[
d(A,B) =
\frac{ e(A,B) }
{|A||B|}
\]
where we write $e(A,B)$ for the number of edges in $G$ with one end in $A$ and the other in $B$.
For a single vertex $v$ we will write $d(v, B)$ instead of $d(\{v\}, B)$.
\end{definition}

 Our main tool in our analysis of Ramsey graphs is the statistical property shown in Corollary~\ref{cor:dense_extension} below, which plays a role analogous to that played by
Lemma~\ref{lem:propertyC} for random graphs.
We use the following result proved in~\cite[Case II of Theorem 1]{promel1999non}:

\begin{lemma}[\cite{promel1999non}]%
\label{lmm:dense_pair}
There exists constants $\beta>0$, $\delta>0$
such that if $G$ is a $c$-Ramsey graph, then there is a set $S \subseteq V(G)$ with $|S| \geq n^{\frac{3}{4}}$
such that, for all $A,B \subseteq S$,
if $|A|,|B|\geq |S|^{1-\beta}$ then $\delta \leq d(A,B) \leq 1-\delta$.
\end{lemma}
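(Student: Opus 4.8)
We sketch a possible route to Lemma~\ref{lmm:dense_pair}, which is the argument behind Case~II of Theorem~1 of~\cite{promel1999non}; the hard part is flagged at the end. Throughout, ``constant'' means depending only on $c$, and the whole argument in fact works with any constant in place of $c$, which is what we need when we pass to complements and induced subgraphs.

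\textbf{Reduction to one-sided obstructions.} Call $A,B\subseteq W$ a \emph{sparse rectangle} of $G[W]$ if $|A|,|B|\ge|W|^{1-\beta}$ and $d(A,B)<\delta$. Since $d(A,B)>1-\delta$ in $G$ is exactly $d(A,B)<\delta$ in $\bar G$, since $\bar G$ is $c$-Ramsey whenever $G$ is, and since any $m$-vertex induced subgraph of a $c$-Ramsey $n$-vertex graph is $\big(c\tfrac{\log n}{\log m}\big)$-Ramsey (a bounded constant when $m\ge n^{\Omega(1)}$), it suffices to prove the reduced claim: \emph{every $c$-Ramsey graph has a subset $S_1$ with $|S_1|\ge n^{9/10}$ such that $G[S_1]$ has no sparse rectangle}. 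The property ``$G[S]$ has no sparse rectangle'' is inherited, at the price of enlarging $\beta$ a little, by any $S'\subseteq S$ with $|S'|\ge|S|^{1-\beta'}$, because then $|S'|^{1-\beta''}\ge|S|^{(1-\beta')(1-\beta'')}\ge|S|^{1-\beta}$ for suitable constants. Hence, given the reduced claim: apply it to $G$ to obtain $S_1$; apply it to $\bar G[S_1]$ to obtain $S_2\subseteq S_1$ with no sparse rectangle in $\bar G$; by inheritance $S_2$ also has no sparse rectangle in $G$; and $|S_2|\ge|S_1|^{9/10}\ge n^{3/4}$, so $S=S_2$ works.

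\textbf{Degree cleanup.} A convenient first step removes vertices of extreme degree. Start with $W=V(G)$; while some $v\in W$ has $\deg_W(v)\notin[\delta_0|W|,(1-\delta_0)|W|]$, restrict $W$ to the non-neighbours of $v$ (discarding $v$) and record $v$ in a set $I$ if $\deg_W(v)$ is too small, or restrict $W$ to $\NB(v)\cap W$ and record $v$ in a set $C$ if $\deg_W(v)$ is too large. Then $I$ always remains independent and $C$ always remains a clique, so since $G$ is $c$-Ramsey the process halts after fewer than $2c\log n$ steps (otherwise $|I|$ or $|C|$ reaches $c\log n$), each step shrinking $|W|$ by a factor at least $1-\delta_0$. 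The surviving set $W^\ast$ has $|W^\ast|\ge n^{1-\gamma}$ with $\gamma=\gamma(c,\delta_0)$ small for $\delta_0$ small (say $\gamma<1/10$), and every vertex of $W^\ast$ has degree in $[\delta_0|W^\ast|,(1-\delta_0)|W^\ast|]$. This disposes of the ``single vertex versus the whole set'' obstructions, but not of polynomial-size sparse rectangles, which the next step must handle.

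\textbf{The core, and the main obstacle.} It remains to prove the reduced claim for a $c$-Ramsey graph that we may assume has balanced degrees. The plan is an induction on $n$: if there is no sparse rectangle we are done; otherwise fix one, $(A,B)$, so that by averaging at least half of $A$ consists of vertices $v$ with fewer than $2\delta|B|$ neighbours in $B$. The naive move --- place such a $v$ into a growing independent set and recurse on $B$ minus the neighbours of $v$ --- shrinks the working set from $m$ vertices only to $m^{1-\beta}$ vertices, so it can be iterated a mere $O(1/\beta)$ times and yields a homogeneous set of size $O(1/\beta)$, not a contradiction; nor can one simply read a large \emph{empty} rectangle off $(A,B)$, because a random-like sparse bipartite graph contains no empty rectangle of polynomial size. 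A correct argument must instead grow a homogeneous set of size $c\log n$ across $\Theta(\log n)$ rounds while contracting the working set by only a $1-o(1)$ factor per round (so $n^{1-o(1)}$ vertices survive), exploiting the sparse-rectangle structure together with the Caro--Wei and Ramsey bounds to keep a still-large candidate region after every round; reaching round $c\log n$ would contradict $c$-Ramseyness, so some round must expose a subset with no sparse rectangle. The main obstacle is exactly this quantitative bookkeeping: controlling how fast the candidate region may shrink against how slowly sparse rectangles are permitted to shrink, so that the final ``no sparse rectangle'' guarantee survives down to the polynomial scale $|S|^{1-\beta}$. This is the content of Case~II of Theorem~1 of~\cite{promel1999non}, whose argument I would follow in full.
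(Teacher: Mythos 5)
This lemma is imported verbatim from Case~II of Theorem~1 of Pr\"omel and R\"odl~\cite{promel1999non}; the paper gives no proof of its own, so there is no internal argument to match yours against. The preprocessing you sketch is sound: reducing to a one-sided ``no sparse rectangle'' claim by passing to the complement and using inheritance under polynomially large subsets works with the $\beta$-adjustment you note; the degree-cleanup loop terminates within $2c\log n$ steps by $c$-Ramseyness while losing only an $n^{\gamma}$ factor, with $\gamma$ controllable via $\delta_0$; and you correctly identify why the naive induction fails --- shrinking one side to $m^{1-\beta}$ per round admits only $O(1/\beta)$ rounds, far short of the $c\log n$ needed for a contradiction.

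That last observation, however, is precisely the difficulty, and the proposal stops exactly where the real proof must begin. The closing paragraph states what a correct argument ``must instead'' accomplish --- contracting the candidate region by only $1-o(1)$ per round while building a homogeneous set of size $c\log n$ --- but offers no mechanism: it does not show how a single sparse rectangle sets up such a slowly-shrinking iteration, what quantity serves as the potential function, or how the failure to reach $c\log n$ rounds actually yields a large subset free of sparse rectangles all the way down to scale $|S|^{1-\beta}$. You concede this yourself by saying you ``would follow [the] argument in full.'' So what you have is a valid reduction and an accurate description of what remains, but the core of the lemma --- the density-increment/stepping-up argument in~\cite{promel1999non} that derives the dense-pair property from $c$-Ramseyness --- is missing, and that core is the entire content of the statement.
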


Now fix a $c$-Ramsey graph $G$. Let $S$, $\beta$ and $\delta$ be as in the above lemma,
and let $m=|S|$. Notice that since our goal is to give an Adversary strategy
 for the formula $\clique(G)$, we will only use the lower bound $\delta \leq d(A,B)$
from the lemma.

\begin{corollary}\label{cor:dense_extension} \sloppypar
 Let $X, Y_{1}, Y_{2}, \ldots, Y_{r} \subseteq S$ be
 such that $|X| \ge rm^{1-\beta}$ and $|Y_{1}|, \ldots, |Y_{r}| \geq m^{1-\beta}$.
Then there exists $v \in X$ such that $d(v,Y_{i}) \geq \delta$ for each $i=1, \ldots, r$.
\end{corollary}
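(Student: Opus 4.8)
The plan is to argue by contradiction, using a pigeonhole-plus-averaging argument that reduces the statement to Lemma~\ref{lmm:dense_pair}. Suppose no such $v$ exists. Then every $v \in X$ witnesses a failure: there is an index $i(v) \in [r]$ with $d(v, Y_{i(v)}) < \delta$. Fix one such choice of $i(v)$ for each $v$ (breaking ties arbitrarily), and use it to partition $X$ into $r$ blocks $X_1, \ldots, X_r$, where $X_i = \{ v \in X : i(v) = i \}$.

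Next I would apply pigeonhole to the sizes of the blocks. Since $\sum_{i=1}^r |X_i| = |X| \ge r m^{1-\beta}$, the average block size is at least $m^{1-\beta}$, so there is some $j$ with $|X_j| \ge m^{1-\beta}$. Now both $X_j \subseteq S$ and $Y_j \subseteq S$ have size at least $m^{1-\beta} = |S|^{1-\beta}$, so Lemma~\ref{lmm:dense_pair} applies to the pair $(X_j, Y_j)$ and yields $d(X_j, Y_j) \ge \delta$.

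On the other hand, unpacking the definition of density, $d(X_j, Y_j) = \frac{e(X_j, Y_j)}{|X_j|\,|Y_j|} = \frac{1}{|X_j|} \sum_{v \in X_j} d(v, Y_j)$, and every summand satisfies $d(v, Y_j) < \delta$ by construction of $X_j$ (each $v \in X_j$ has $i(v) = j$). Hence the average is strictly less than $\delta$, contradicting the bound from Lemma~\ref{lmm:dense_pair}. This contradiction shows that some $v \in X$ must have $d(v, Y_i) \ge \delta$ for all $i$, as required.

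There is no real obstacle here: the only ingredients are a counting argument and the observation that mutual density of two sets is the average of the one-sided densities, so the ``hard part'' is simply identifying the right instance $(X_j, Y_j)$ to feed into Lemma~\ref{lmm:dense_pair}. The one point to be careful about is that we only invoke the lower bound $\delta \le d(A,B)$ from that lemma, which is all that is available for the clique side and all that is needed; the size hypothesis $|X| \ge r m^{1-\beta}$ is exactly calibrated so that the pigeonhole step produces a block large enough for the lemma to apply.
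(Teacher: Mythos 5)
Your proof is correct and is essentially the paper's argument, just phrased contrapositively: the paper directly defines the (possibly overlapping) bad sets $X_i = \{u \in X : d(u,Y_i) < \delta\}$, concludes from Lemma~\ref{lmm:dense_pair} that each has size $< m^{1-\beta}$, and uses a union bound to find $v \in X \setminus \bigcup_i X_i$, while you partition $X$ by a chosen witness and apply pigeonhole to extract a large block. The key step --- feeding a large set of vertices that are all sparse towards some $Y_j$ into Lemma~\ref{lmm:dense_pair} and using that $d(X_j,Y_j)$ is the average of the one-sided densities --- is identical in both.
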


\begin{proof}
For  $i=1,\ldots,r$ let
\[
X_{i} = \{u \in X \;|\; d(u,Y_{i})< \delta  \}.
\]
By Lemma~\ref{lmm:dense_pair}, each $|X_{i}|<m^{1-\beta}$.
Hence $X \setminus \bigcup_{i} X_{i}$ is non-empty and we
can take $v$ to be any vertex in $X \setminus \bigcup_{i} X_{i}$.
\end{proof}

The next lemma implies our main result, Theorem \ref{the:ramseylowerbound}.

\begin{lemma}\label{lmm:ramsey_strategy}
There is a constant $\epsilon>0$, independent of $n$ and $G$,
such that there exists a strategy for the Adversary
in the game on $\clique(G)$ which wins
against any Prover who is limited to $\epsilon^2 k^2$ memory locations.
\end{lemma}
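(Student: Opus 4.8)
The plan is to mirror the structure of the proof of Lemma~\ref{lmm:main_lemma_rnd} for the random graph, but with two modifications forced by the weaker statistical guarantee available from Corollary~\ref{cor:dense_extension}. The first modification is that the Adversary must confine all the vertices it ever fixes to lie inside the set $S$ of Lemma~\ref{lmm:dense_pair}, so that the density estimate applies; since $|S| = m \ge n^{3/4}$ and vertices of $S$ are named by $k$-bit strings, a pattern of size at most, say, $\frac{\beta}{2}k$ (or some appropriate fraction of $k$) is still consistent with super-polynomially many, in particular with at least $m^{1-\beta}$, vertices of $S$ — this is the routine counting fact that replaces the ``$n^{2/3} - |U|$ many vertices'' bound in the random case. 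The second modification is more essential: Corollary~\ref{cor:dense_extension} only gives us a vertex $v \in X$ whose density $d(v, Y_i)$ to each previously-relevant set $Y_i$ is at least $\delta$, not a vertex actually adjacent to a fixed set of earlier vertices. So the Adversary cannot simply commit to a single vertex $v^i$ at the moment $|p^i|$ hits a threshold; instead it should maintain, for each currently ``fixed'' index $i$, not a single vertex but a large candidate set $Y_i \subseteq S$ of vertices all consistent with $p^i$, with $|Y_i| \ge m^{1-\beta}$, and defer the final choice.

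Concretely, I would have the Adversary play as follows. Fix a threshold $t = \Theta(\beta k)$ and set $\epsilon$ at the end to make the memory bound work. While $|p^i| < t$ the Adversary answers queries on index $i$ arbitrarily. When $|p^i|$ first reaches $t$, the Adversary ``activates'' index $i$: it sets $Y_i$ to be the set of all vertices of $S$ consistent with the current $p^i$ and adjacent-in-density to all previously active $Y_j$ — more precisely it uses Corollary~\ref{cor:dense_extension} with $X = \{v \in S : v \text{ consistent with } p^i\}$ and the existing $Y_j$'s to find a vertex $v^i$ with $d(v^i, Y_j) \ge \delta$ for every currently active $j$, fixes that $v^i$ as index $i$'s vertex, and then \emph{replaces} each earlier $Y_j$ by $Y_j \cap N(v^i)$. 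The point of the density bound $d(v^i, Y_j) \ge \delta$ is exactly that this intersection has size at least $\delta |Y_j|$, so it shrinks only by a constant factor at each activation. Since at most $r \le t$ indices are ever simultaneously active (because the Prover has at most $\epsilon^2 k^2$ memory and each active index occupies $t$ locations, forcing $r \le \epsilon^2 k^2 / t = O(\epsilon^2 k)$, which we arrange to be at most $t$), each $Y_j$ undergoes at most $r$ halvings-by-$\delta$, so $|Y_j| \ge m \cdot \delta^{r} \cdot (\text{fraction consistent with } p^j)$, and choosing the constants so that this stays above $r m^{1-\beta}$ keeps Corollary~\ref{cor:dense_extension} applicable. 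When the Prover forgets bits so that $|p^i|$ drops below $t$, index $i$ is deactivated and its $v^i$ and $Y_i$ are discarded.

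The main obstacle — and the place where the argument is genuinely more delicate than the random case — is the bookkeeping around \emph{deactivation and reactivation}: after an index is deactivated and later reactivated with a different pattern $p^i$, the candidate sets $Y_j$ of the indices that stayed active must still be large enough, and the newly chosen $v^i$ must be density-dense to all of them, while the vertices that were fixed and then unfixed no longer impose any constraint (correctly, since the Prover has forgotten the information that would let it catch an inconsistency). I need to argue that an invariant of the form ``for every currently active index $j$, $|Y_j| \ge r m^{1-\beta}$ where $r$ is the current number of active indices, and every pair of currently fixed vertices is joined by an edge'' is preserved by each Adversary move. Edge-ness between two fixed vertices $v^i, v^j$ with $j$ activated before $i$ holds because at $i$'s activation we intersected $Y_j$ into $N(v^i)$ and $v^j$ was chosen from (the then-current, hence a subset of the original) $Y_j \subseteq N(v^i)$ — one has to be careful that $v^j \in N(v^i)$ follows because $v^j$ was selected from $Y_j$ only \emph{after} all later intersections, so in particular from a set contained in $N(v^i)$; arranging the order of operations so this is literally true is the crux. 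Wait — that ordering is exactly backwards, so instead the right invariant is: $v^j$ is selected at $j$'s activation time, and at each later activation of some $i$ we shrink $Y_j$ to $Y_j \cap N(v^i)$, and crucially we must also verify $v^j \in N(v^i)$; this follows because $v^i$ is chosen from $X \subseteq S$ with $d(v^i, Y_j) \ge \delta$, but density alone does not put $v^j$ itself in $N(v^i)$. The fix is to choose $v^i$ not just density-dense to each $Y_j$ but actually inside $\bigcap_j N(v^j)$: apply Corollary~\ref{cor:dense_extension} with the $Y_j$'s as is, obtaining $v^i$ with $d(v^i, Y_j) \ge \delta$, \emph{then} note this does not suffice, so instead re-run the corollary's internal argument directly — the set of $u \in X$ failing $u \in N(v^j)$ for a fixed already-chosen $v^j$ is exactly $X \setminus N(v^j)$, which need not be small. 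The honest resolution, and what I expect the paper does, is to keep the candidate sets and choose $v^i \in Y_i^{\mathrm{new}} := \{v \in S \text{ consistent with } p^i\} \cap \bigcap_{j \text{ active}} N(\text{future choices})$ — i.e. one keeps the $Y_j$ as \emph{sets}, never commits to individual $v^j$ until forced, and wins because a clique among the eventual choices is guaranteed inductively by the density bound ensuring all these sets stay large and mutually dense. I would present it that way: maintain large mutually-$\delta$-dense candidate sets $Y_i$ for active indices, invoke Corollary~\ref{cor:dense_extension} to shrink them consistently whenever a new index activates, and only at the very end (or never) read off an actual clique; the memory bound $\epsilon^2 k^2$ with $t = \Theta(\beta k)$ and $\epsilon$ small relative to $\beta$ and $\delta$ makes all the size inequalities go through, and then Theorem~\ref{thm:size_width} via Lemma~\ref{lem:game_width} finishes Theorem~\ref{the:ramseylowerbound}.
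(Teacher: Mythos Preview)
Your proposal has a genuine gap at the step you dismiss as a ``routine counting fact.'' You assert that a pattern of size at most $\frac{\beta}{2}k$ is consistent with at least $m^{1-\beta}$ vertices of $S$. This is false in general: Lemma~\ref{lmm:dense_pair} gives no information about how $S$ sits inside $\{0,1\}^k$, so $S$ could, for instance, lie entirely in the half-space $\{v : v_1 = 0\}$, in which case the size-$1$ pattern $p = 1*\cdots*$ meets $S$ in zero vertices. More insidiously, the Prover can, by a sequence of queries and forgets on a single index $i$, walk the pattern $p^i$ around until it lands on a small pattern whose intersection with $S$ is tiny; at that point your Adversary cannot invoke Corollary~\ref{cor:dense_extension} and is stuck. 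The paper confronts this head-on by first passing to a subset $S^* \subseteq S$, constructed greedily by repeatedly deleting $C_p \cap S$ for any small pattern $p$ with $0 < |C_p \cap S| \le m^{1-\alpha}$; one checks via a counting argument that $S^*$ remains nonempty, and by construction every small pattern is either disjoint from $S^*$ or meets it in more than $m^{1-\alpha}$ points. The Adversary then plays entirely inside $S^*$, answering unfixed queries according to some vertex of $C_{p^i} \cap S^*$, which keeps $p^i$ ``active'' at all times.

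Your discussion of how to guarantee that the fixed vertices form a clique is also unresolved. You correctly observe that choosing $v^i$ merely $\delta$-dense to each $Y_j$ does not force $v^j \in N(v^i)$, and you almost land on the fix---take $v^i \in \bigcap_j N(v^j)$---but then abandon it, and your final ``never commit'' suggestion cannot work, since once the Prover has queried all $k$ bits of an index the Adversary has in fact committed. The paper does commit: it fixes $v^i$ at the threshold, choosing it from $X = C_{p^i} \cap N(U) \cap S^*$ where $U$ is the set of currently fixed vertices, so $v^i$ is automatically adjacent to every $v^j \in U$. The subtlety is showing $|X|$ is large enough for Corollary~\ref{cor:dense_extension}. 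This is handled by an invariant (the paper's condition~3) quantifying over \emph{all} active patterns $p$ and \emph{all} subsets $U' \subseteq U$: namely $|C_p \cap S^* \cap N(U')| \ge |C_p \cap S^*| \cdot \delta^{|U'|}$. The $Y$-sets fed to the corollary are then indexed by pairs $(p, U')$, not by active indices; preserving this invariant across activations, deactivations, and reactivations is exactly what the density conclusion $d(v^i, Y_{(p,U')}) \ge \delta$ buys, and it survives forgetting because shrinking $U$ only weakens the requirement.
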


\begin{proof}
Let $\epsilon>0$ be a constant, whose precise value we will fix later.
As in the proof of Lemma \ref{lmm:main_lemma_rnd},
the  Adversary's replies when queried about the $i$th vertex will depend on the
size of $p^i$, the pattern representing the current information
known to the Prover about the $i$th vertex. If $|p^i|<\epsilon k -1$ the Adversary can
reply in a somewhat arbitrary way (see below), but if  $|p^i| = \epsilon k -1$
then the Adversary will fix a value $v^i$ for the $i$th vertex, consistent with $p^i$,
 and will reply according to
$v^i$ until $|p^i|$ falls back below $\epsilon k$, at which point the vertex is no longer fixed.
By the limitation on the Prover's memory, no more than $\epsilon k$
vertices can be fixed simultaneously, which will allow the
Adversary to ensure that the set of currently fixed vertices always forms a clique.

Let $S$, $\beta$ and $\delta$ be as in Lemma \ref{lmm:dense_pair} and
let $m = |S|$. We will need to use Corollary \ref{cor:dense_extension} above
to make sure that the Adversary can find a $v^i$ with suitable
density properties when fixing the $i$th vertex. But here there is a difficulty
which does not arise with the random graph.
Corollary \ref{cor:dense_extension} only works for subsets of the set $S$,
and $S$ may be distributed very non-uniformly over the vertices of $G$. In particular,
through some sequence of querying and forgetting bits for $i$, the
Prover may be able to force the Adversary into a position where the set of
vertices consistent with a small $p^i$ has only a very small intersection
with $S$, so that it is impossible to apply Corollary~\ref{cor:dense_extension}.

Let $\alpha$ be a constant with $0 < \alpha < \beta$, whose precise
value we will fix later. We write $C_p$ for the set of vertices of $G$
consistent with a pattern $p$. We write $P_{\epsilon k}$ for the
set of patterns $p$ with $p \le \epsilon k$. To avoid the problem
in the previous paragraph, we will construct a non-empty set $S^* \subseteq S$
with the property that,  for every $p \in P_{\epsilon k}$, either
\begin{equation*}\label{eq:intersection}
  C_{p}\cap S^* = \emptyset \quad \text{or} \quad |C_{p}\cap S^*|>m^{1-\alpha}.
\end{equation*}
In the second case we will call the pattern $p$ \emph{active}.
The Adversary can then focus on the set $S^*$,
in the sense that he will pretend that his clique is in $S^*$ and
will ignore the vertices outside $S^*$.

We construct $S^*$ in a brute-force way.
We start with $S_{0} =S$ and define a sequence of subsets $S_0, S_1, \ldots$
where each $S_{t+1} = S_{t}\setminus{C_{p}}$ for the lexicographically first
$p \in P_{\epsilon k}$ for which $0< |S_{t}\cap{C_{p}}| \leq m^{1-\alpha}$,
if any such $p$ exists. We stop as soon as there is no such $p$, and let
 $S^*$ be the final subset in the sequence.
To show that $S^*$ is non-empty, notice that at
each step at most $m^{1-\alpha}$ elements are removed.
Furthermore there are at most $|P_{\epsilon k}|$ steps, since a set of vertices $C_{p}$ may be removed at most once. Recall that $n = 2^k$ and $m \ge n^\frac{3}{4}$. We have
\begin{equation*}
|P_{\epsilon k}| = \sum^{\epsilon k}_{i=0} 2^{i} \binom{k}{i}
\leq  \epsilon k \cdot 2^{\epsilon k} \binom{k}{ \epsilon k}
\leq \epsilon k  \cdot  n^{\epsilon} n^{H(\epsilon)},
\end{equation*}
where $H(x)$ is the binary entropy function $- x \log x - (1-x)\log (1-x)$,
and we are using the estimate $\binom{k}{\epsilon k}\leq 2^{k H(\epsilon )}$ which holds for $0<\epsilon <1$.
Then
\[
|S^*| \ge |S| - |P_{\epsilon k}|  \cdot  m^{1-\alpha}
\ge n^\frac{3}{4} - \epsilon k  \cdot n^{\epsilon + H(\epsilon)} n^{ \frac{3}{4} ( 1 - \alpha )},
\]
so, for large $n$,  $S^*$ is non-empty as long as we choose $\alpha$ and $\epsilon$
satisfying
\begin{equation}\label{eq:parameter_condition1} \tag{$\star$}
 \tfrac{3}{4}\alpha > \epsilon + H(\epsilon).
\end{equation}
Notice that if $S^*$ is non-empty then in fact $|S^*| > m^{1-\alpha}$,
since $S^*$ must intersect at least the set $C_p$ where $p$
is the empty pattern.

We can now give the details of the Adversary's strategy.
The Adversary maintains the following three conditions,
which in particular guarantee that the Prover will never win.
\begin{enumerate}
\item
For each index $i$, if $|{p^i}| < \epsilon k$ then $p^i$ is active, that is, $C_{p^i} \cap S^* \neq \emptyset$.
\item
For each index $i$, if $|{p^i}| \ge \epsilon k$ then the $i$th vertex is fixed to some
$v^i \in C_{p^i} \cap S^*$; furthermore the set $U$ of currently fixed vertices $v^j$ forms a clique.
\item \label{cond:partial_clique_dense}
For every active $p \in P_{\epsilon k}$ and every $U' \subseteq U$, we have
\begin{equation*}
  |C_{p} \cap S^* \cap \NB(U')| \geq |C_{p} \cap S^*|\cdot \delta^{|U'|}.
\end{equation*}
\end{enumerate}
These are true at the start of the game, because no vertices are fixed
and each $p^i$ is the empty pattern.

Suppose that, at a turn in the game, the Prover queries a bit for
an index $i$ for which he currently has information $p^i$.
If $|p^i|<\epsilon k -1$, then by condition 1 there is at least one vertex $v$ in
$C_{p^i} \cap S^*$. The Adversary chooses an arbitrary such $v$ and replies according to the bit of $v$.
If $|p^i| \ge \epsilon k$, then a vertex $v^i \in C_{p^i}$ is already fixed,
and the Adversary replies according to the bit of $v^i$.

If $|p^i|=\epsilon k -1$,  then the Adversary must fix a vertex $v^i$ for $i$
in a way that satisfies conditions 2 and 3.
To preserve condition~2, $v^i$ must be connected to every vertex in the set $U$
of currently fixed vertices.
To preserve condition~3, it is enough to choose $v^i$ such that
\[
d(v^i,C_p \cap S^* \cap \NB(U')) \geq
|C_{p} \cap S^* \cap \NB(U')|  \cdot \delta
\]
for every active $p$ in $P_{\epsilon k}$ and every $U' \subseteq U$.
To find such a $v^i$ we will
apply Corollary~\ref{cor:dense_extension},
with one set $Y$ for each pair of a suitable $p$ and $U'$.
We put
\begin{align*}
X & = C_{p^i} \cap \NB(U) \cap S^*\\
Y_{(p,U')} & =C_{p} \cap \NB(U') \cap S^*
\text{\ for each active $p \in P_{\epsilon k}$ and each $U'\subseteq U$}\\
r & = | \{ \text{pairs } (p, U') \}| \le | P_{\epsilon k}| \cdot 2^{|U|}.
\end{align*}
We know $|U| \le \epsilon k$. By condition 1 we know $p^i$ is active, hence
$|C_{p^i} \cap S^*| > m^{1-\alpha}$. So by condition 3 we have
\[
|X| \ge m^{1-\alpha}  \delta^{\epsilon k} = m^{1 - \alpha + \frac{4}{3}\epsilon\log \delta}.
\]
For similar reasons we have the same lower bound on the size of each $Y_{(p,U')}$.
Furthermore
\[
r \le 2^{\epsilon k} \cdot \epsilon k \cdot n^{\epsilon + H(\epsilon)}
= \epsilon k \cdot n^{2 \epsilon + H(\epsilon)}
= \epsilon k \cdot m^{\frac{8}{3} \epsilon + \frac{4}{3} H(\epsilon)}.
\]
To apply Corollary~\ref{cor:dense_extension} we need to satisfy
$|X|\geq r m^{1-\beta}$ and $|Y_{(p,U')}| \geq m^{1-\beta}$. Both conditions are implied by the inequality
\begin{equation}\label{eq:parameter_cond2} \tag{$\dagger$}
\beta - \alpha > \tfrac{8}{3}\epsilon + \tfrac{4}{3}H(\epsilon) - \tfrac{4}{3}\epsilon\log \delta.
\end{equation}

We can now fix values for the constants $\alpha$ and $\epsilon$ to
satisfy the inequalities (\ref{eq:parameter_condition1}) and (\ref{eq:parameter_cond2}).
Since $H(\epsilon)$ goes to zero as $\epsilon$ goes to zero, we can make the right hand sides of
(\ref{eq:parameter_condition1}) and (\ref{eq:parameter_cond2}) arbitrary small by setting $\epsilon$ to be a small constant. We then set $\alpha$ appropriately.

Finally, it is straightforward to check that if the Prover forgets a bit for an index~$i$, then the
three conditions are preserved.
\end{proof}

\section*{Acknowledgements}
Part of this work was done while Lauria was at the Institute of Mathematics of the
Academy of Sciences of the Czech Republic, supported by the Eduard \v{C}ech Center.
Lauria, \Pudlak and Thapen did part of this research at the
Isaac Newton Institute for the Mathematical Sciences, where \Pudlak and Thapen
were visiting fellows in the programme \textit{Semantics and Syntax}.
 \Pudlak and Thapen were also supported by grant IAA100190902 of GA AV \v{C}R,
and by Center of Excellence CE-ITI under grant P202/12/G061 of GA \v{C}R and RVO: 67985840.
Lauria was also supported by the European Research Council under the
European Union's Seventh Framework Programme \mbox{(FP7/2007--2013) /} ERC
grant agreement no~279611


\bibliography{theoryofcomputing}
\bibliographystyle{abbrv}

\end{document}